\DeclareMathOperator*{\argmin}{arg\,min}
\newtheorem{theorem}{Theorem}
\newenvironment{proof}{\textit{Proof: }}{\hfill$\square$}
\begin{document}

\title{Approximate Network Symmetry}
\author{Yanchen Liu}
\email{To whom correspondence should be addressed. Email: lycrdfzpku@gmail.com}

\affiliation{Center for Complex Network Research, Northeastern University, Boston, USA}

\date{\today}


\begin{abstract}
    \scriptsize
    We define a new measure of network symmetry that is capable of capturing approximate global symmetries of networks, extending previous literature on network symmetry that focuses on mainly perfect network symmetry. We apply this measure to different networks sampled from several classic network models, as well as several real-world networks. We find that among the network models that we have examined, Erdös-Rényi networks have the least levels of symmetry, and Random Geometric Graphs are likely to have high levels of symmetry. We find that our network symmetry measure can capture properties of network structure, and help us gain insights on the structure of real-world networks. Moreover, our network symmetry measure is capable of capturing imperfect network symmetry, which would have been undetected if only perfect symmetry is considered.
\end{abstract}

\maketitle

\section{Introduction}

Recent work has revealed that complex networks can have a variety of different symmetries \cite{macarthur2006symmetry,macarthur2008symmetry,ball2018symmetric,xiao2008emergence,macarthur2006symmetry,macarthur2008symmetry,morone2020fibration,leifer2020circuits,xiao2008network}. 
Network symmetry has been shown to be useful in many complex network research areas. For example, in the studies of network dynamics, it has been shown that there is a direct connection between network symmetry and the formation of synchronized node clusters \cite{pecora2014cluster, cho2017stable, sorrentino2016complete}. Also, network symmetry has been used in reducing networks to their structural skeletons, also known as network quotients \cite{xiao2008network, boldi2002fibrations}. 
This technique is useful in studying dynsamics on protein-protein interaction networks \cite{rietman2011review}, and gene regulatory networks \cite{morone2020fibration,leifer2020circuits}.
Network symmetry is also related to the adjacency matrix spectrum, causing additional degeneracies of certain eigenvalues \cite{macarthur2009spectral,lauri2016topics}.
It has also been observed that many real-world networks have surprisingly high levels of symmetry \cite{ball2018symmetric,xiao2008emergence,macarthur2006symmetry,macarthur2008symmetry}.

Most papers on network symmetry use graph automorphism groups to evaluate the symmetries of networks \cite{macarthur2006symmetry,macarthur2008symmetry}. An automorphism of a network is a node permutation $\mathbbm{P}: i\rightarrow \mathbbm{P}_i$ that preserves the adjacency matrix $A$ of the network: $A_{ij} = A_{\mathbbm{P}_i \mathbbm{P}_j}$ for $i,j \in \{1,..,N\}$ \cite{as1998modern}. Specifically, if there is an edge between nodes $i,j$, then an automorphic permutation $\mathbbm{P}$ yields that there is an edge between nodes $\mathbbm{P}_i, \mathbbm{P}_j$ (Figure~\ref{fig:permutation_exmp}). 
Figure~\ref{fig:symmetric_net_exmps} shows three examples of symmetric networks, and examples of their automorphism $\mathbbm{P}$.
A permutation of nodes can be manifested in the adjacency matrix as a permutation of the rows and columns, and a permutation is automorphic if the adjacency matrix after the permutation is exactly the same as the original one. 
A permutation $\mathbbm{P}$ of $N$ nodes can be encoded in a permutation matrix $P$, which is defined such that $P_{ij}=1$ if the $j$th-ordered node in the original network is moved to $i$th slot in the new node order, otherwise $P_{ij}=0$ (Figure~\ref{fig:permutation_exmp}). 
Such permutation matrices $P$ have the properties that each row and column has only one non-zero element, which is equal to $1$ ($\sum_{j=1}^{N} P_{ij} = 1$ for $i\in \{1,..,N\}$ and $\sum_{i=1}^{N} P_{ij} = 1$ for $j\in \{1,..,N\}$), and $PP^T = I$. 
An automorphism ($\mathbbm{P}$) of a network with adjacency matrix $A$ is equivalent to $A = PAP^T$, 
where $PAP^T$ is the adjacency matrix after permutation.
In most literature on network symmetry, a network is defined to be {\it symmetric} if there exists a non-trivial automorphism ($P\neq I$, $I$ is the identity matrix) of the network, and asymmetric if there is no non-trivial automorphism (Figure~\ref{fig:4x4_2D_Lattices}) \cite{macarthur2006symmetry,macarthur2008symmetry}. 
The size of the automorphism group of a network, or the number of nodes involved in the automorphism group, have been used as measures to quantify the level of symmetry \cite{bollobas1979graph,macarthur2006symmetry,macarthur2008symmetry,macarthur2007automorphism}.

\begin{figure}
    \centering
    \includegraphics[width=0.48\textwidth]{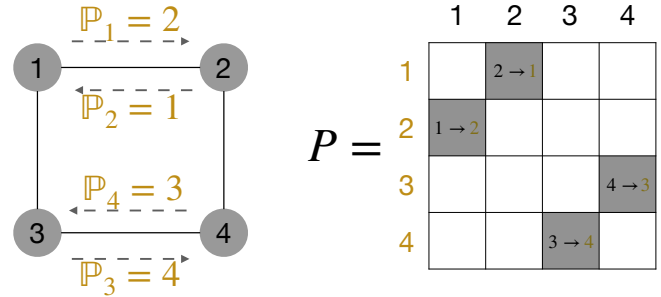}
    \caption{\scriptsize
    A four-node square network, and one of its automorphisms node permutation $\mathbbm{P}$. The node permutation matrix $P$ has non-zero elements $P_{12}$, $P_{21}$, $P_{34}$ and $P_{43}$. A non-zero element $P_{ij}$ represents a permutation of node $i$ to node $j$'s slot.
    The columns' labels of $P$ corresponds to the nodes' old labels, and the rows' labels of $P$ corresponds to  the nodes' new labels.
    }
    \label{fig:permutation_exmp}
\end{figure}

\begin{figure}
    \centering
    \includegraphics[width=0.46\textwidth]{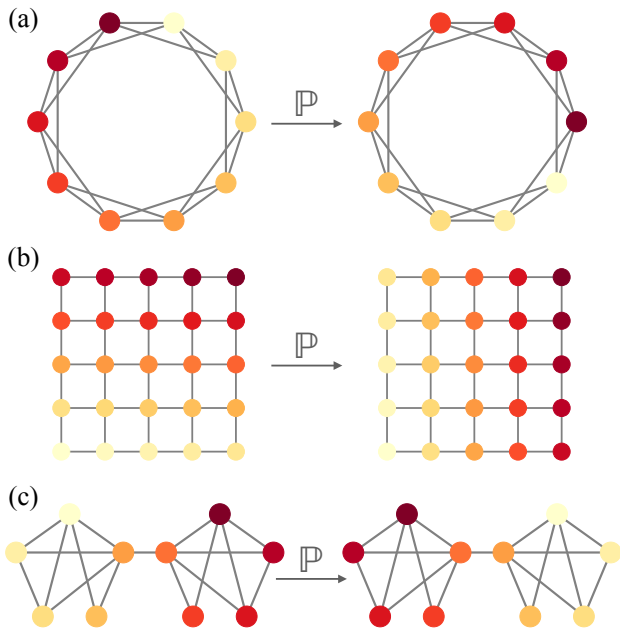}
    \caption{\scriptsize
    Three examples of networks with perfect symmetry.
    (a): A ring lattice network with rotational symmetry.
    (b): A 2D lattice network with $\pi/2$-rotational and mirror symmetry.
    (c): A network with mirror symmetry.
    }
    \label{fig:small_2D_Lattice_examples}
    \label{fig:symmetric_net_exmps}
    \label{fig:WS_N50_k10_before_after_perm}
\end{figure}

\begin{figure}
    \centering
    \includegraphics[width=0.4\textwidth]{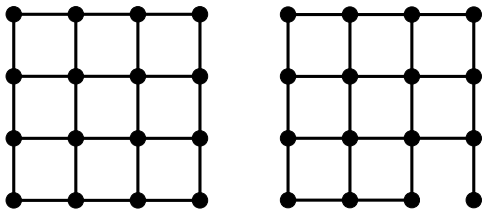}
    \caption{\scriptsize A $4$ by $4$ 2D lattice network, and the same lattice network with one edge missing. 
    The lattice network on the left is symmetric under the definition of network symmetry via automorphism \cite{macarthur2006symmetry,macarthur2008symmetry}, because it has many automorphic permutations, for example, node `reflections' along the diagonals or node `rotations' for multiples of $\pi/2$. 
    The network on the right, however, has no nontrivial automorphism, which means that it is evaluated as having no symmetry by most commonly used symmetry measures via automorphism. 
    This is an example of the fragility of network automorphism in some cases, especially in the cases of global symmetries. 
    Intuitively, the network on the right still exhibits a certain non-zero level of symmetry, albeit being less than that of the network on the left.
    }
    \label{fig:4x4_2D_Lattices}
\end{figure}

The level of symmetry of many real-world complex networks has been analyzed via their automorphism groups, and it is found that most of them have nontrivial symmetries \cite{macarthur2007automorphism,xiao2008emergence}. This is surprising considering that under the same definition of symmetry, almost all large random graphs (for example Erdös-Rényi random graphs) are asymmetric \cite{bollobas2001random}.
In \cite{macarthur2007automorphism} the authors analyzed several real-world networks, including several genetic interaction networks, the internet network, and the US power grid network. They found that almost all symmetries in these real-world networks can be decomposed into clique or biclique symmetry (Figure~\ref{fig:local_syms}).
They explained that the existence of locally tree-like structures in the real-world networks contributes mainly to their high levels of symmetry, since random trees exhibit high levels of symmetry (derived from identical branches originated from the same root-node, for example Figure~\ref{fig:local_syms}(a,b)) \cite{harary1979probability}.

\begin{figure}
    \centering
    \includegraphics[width=0.46\textwidth]{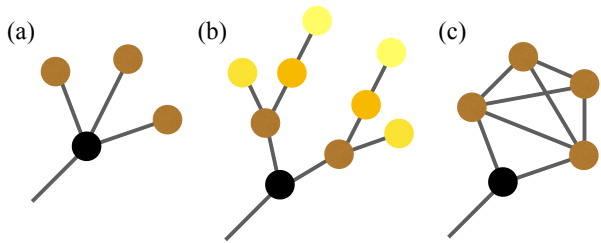}
    \caption{\scriptsize
    Examples of local symmetries in networks. 
    }
    \label{fig:local_syms}
\end{figure}

\begin{figure}
    \centering
    \includegraphics[width=0.48\textwidth]{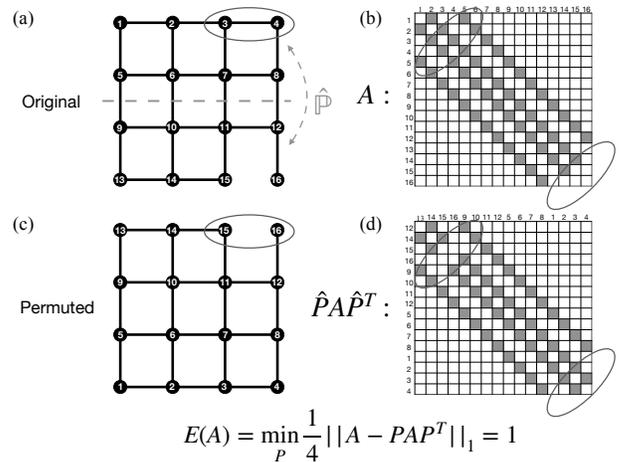}
    \caption{\scriptsize 
    (a): A 2D lattice network with one missing edge, and its node permutation $\hat{\mathbbm{P}}$ that switches the upper and lower half of the network. 
    (b): The adjacency matrix $A$ of the network in (a).
    (c): The permuted network. The circled edge in (a) is not preserved in this node permutation. 
    (d): The adjacency matrix of the permuted network in (c), $\hat{P}A\hat{P}^T$, where $\hat{P}$ is the permutation matrix that encodes the node permutation $\hat{\mathbbm{P}}$. 
    There are for different elements between $A$ and $\hat{P}A\hat{P}^T$ (circled), which gives rise to $E(A) = 1$.
    }
    \label{fig:EA_exmp}
\end{figure}

However, this type of measure has two problems. First, it is a measure of only exact symmetry, meaning that it fails to capture a large group of approximate symmetries of networks (for example, a lattice with one missing edge still has a high level of symmetry, intuitively speaking, but according to the present symmetry measure it may have zero symmetry, see Figure~\ref{fig:4x4_2D_Lattices}).
For large random graph and real-world networks, an exact symmetry of a large number of nodes imposes strict limitations on network structure that are hard to be satisfied due to the stochasticity of networks.
second, in real-world networks, most of the automorphism transformations found are composed of local permutations \cite{macarthur2007automorphism,xiao2008emergence} (Figure~\ref{fig:local_syms}), for example two degree-one nodes attached to the same node are symmetric to each other. 
Therefore if a network has many automorphisms, it could mean that the network has a lot of local symmetries, instead of global symmetries. 
In contrast, a global symmetry involves a permutation of most nodes in the network.
For example, as Figure~\ref{fig:small_2D_Lattice_examples} shows, a ring lattice as rotational symmetry, a 2D square lattice has mirror and reflectional symmetries, and the network in Figure~\ref{fig:small_2D_Lattice_examples}(c) has mirror symmetry, all of which are global symmetries (Figure~\ref{fig:small_2D_Lattice_examples}). 
Global symmetries of networks are harder to be captured by measures that utilize network automorphism which detect only perfect symmetries, due to the strict requirements that a perfect symmetry of a large number of nodes and edges imposes on the network structure.

To overcome these problems, we introduce a new measure of the level of symmetry of a given network with adjacency matrix $A$:
\begin{equation}
    E(A) = \frac{1}{4} \min_P (|| A - PAP^T ||_1),
\end{equation}
where 
$||\cdot ||$ is the order-1 norm, and $P$ is a node permutation matrix with the property of $\tr{P} = 0$. 
The higher level of symmetry a network has, the lower its $E$ is. For a network with perfect symmetry, it has $E=0$. $P$ acting on the adjacency matrix $A$ on the left-hand-side $PA$ permutes the rows of $A$, and $AP^T$ permutes the columns of $A$. Together $PAP^T$ is the resulting adjacency matrix after a permutation of the node sequence. 
A network's symmetry can be represented by $\hat{P} = \underset{P}{\argmin} || A - PAP^T ||_1$, which encodes the node permutation $\hat{\mathbbm{P}}$ that preserves the adjacency matrix to the highest extent. 
To restrict the symmetries we find to global symmetries, we limit $P$ such that $\tr{P} = 0$, meaning that all nodes are involved in the permutation. 
For unweighted networks, $E$ measures the number of edges that are not preserved after the node permutation ($A_{ij}=1$, $A_{\hat{\mathbbm{P}}_i \hat{\mathbbm{P}}_j} = 0$ for $i,j\in \{ 1,2..N\}$). 
For example, Figure~\ref{fig:EA_exmp}(a) shows a small 2D lattice network with one missing edge. 
Its optimal node permutation $\hat{\mathbbm{P}}$ switches the upper and lower half of the network, turning the network into Figure~\ref{fig:EA_exmp}(c).
In this node permutation, one edge in the original network in Figure~\ref{fig:EA_exmp}(a) (circled) is not preserved (Figure~\ref{fig:EA_exmp}(c), circled), which is reflected by the difference between their adjacency matrices $A$ and $\hat{P}A\hat{P}^T$ (Figure~\ref{fig:EA_exmp}(b,d), circled).
Therefore, this network has $E(A)=1$.
In the figure, the node labels represent the nodes identities, and the location of nodes represent the orders of nodes (the order of the rows and columns in the adjacency matrix that represent nodes). 
Since all nodes are permuted, there is no node that remains at the same location after permutation. 
The edges that are preserved have corresponding edges at the same locations in the permuted network (note though that the corresponding edges at the same locations are not the same edges), while the ones that are not do not. 
A network with perfect global symmetry ($E=0$) has at least one global node permutation that preserves all edges, for example the ring lattice network in Figure~\ref{fig:small_2D_Lattice_examples}(a) has perfect symmetry, with permutations $\hat{\mathbbm{P}}$ that preserves all edges ($E=0$), one of which shown in Figure~\ref{fig:small_2D_Lattice_examples}(b). 
We also define $\mathcal{E}(A, P) = \frac{1}{4} || A - PAP^T ||_1$. Therefore we have $E(A) = \min_P \mathcal{E}(A, P) = \mathcal{E}(A, \hat{P})$.


In order to compare the level of symmetry of networks with different sizes, we normalize $E$ by the maximum possible $\mathcal{E}(A, P)$ for a network with $N$ nodes ($ \mathcal{E}_{\max}=\frac{1}{2} {N\choose 2}$ when the network has $\frac{1}{2} {N\choose 2}$ edges and $P$ such that no edges are preserved):
\begin{equation}
    \mathcal{S}(A) = \frac{E(A)}{\frac{1}{2} {N\choose 2}} = \frac{\min_P (|| A - PAP^T ||_1)}{N(N-1)} .
\end{equation}
By definition, empty networks (number of edges $M=0$) and fully-connected networks ($M={N\choose 2}$) always have perfect symmetry ($\mathcal{S}=0$). 

\section{Network Symmetry for several network models}

In this section we investigate and compare the symmetry of networks generated from several classic network models, providing insights on network ensemble structures.


\subsection{Erdös-Rényi Networks}

\begin{figure}
    \centering
    \includegraphics[width=0.48\textwidth]{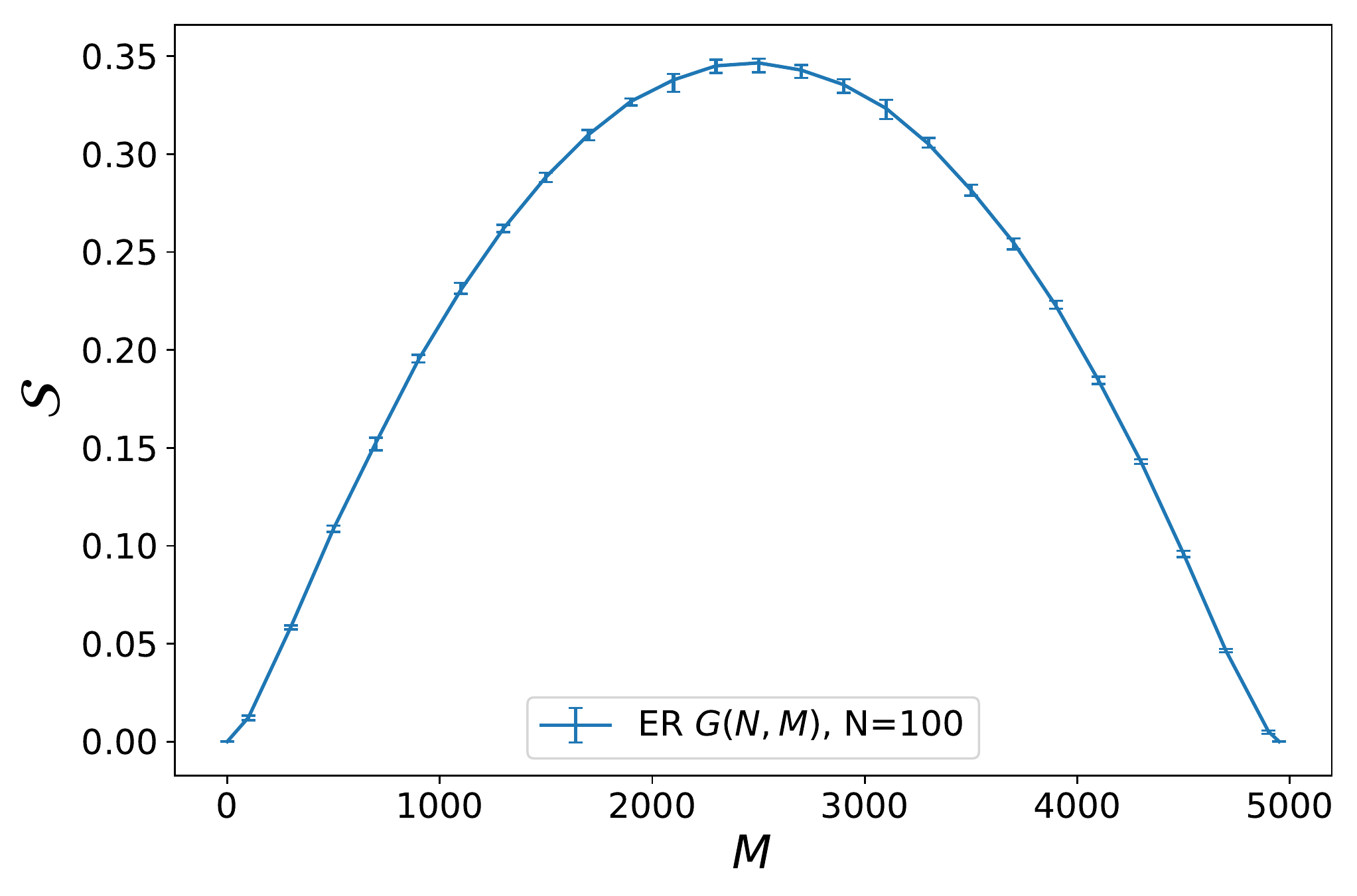}
    \caption{\scriptsize
    $\mathcal{S}$ of ER networks sampled from $G(N, M)$ ensembles with $N=100$ and $M\in [0, \frac{N(N-1)}{2}]$. 
    The error bars mark the range of  $\mathcal{S}$ over which different samples from ensembles span. 
    }
    \label{fig:S_ER_N100}  
\end{figure}

Figure~\ref{fig:S_ER_N100} shows $\mathcal{S}$ measured for Erdös-Rényi (ER) networks \cite{erdos1960evolution} sampled from $G(N,M)$ ensembles with $N=100$ and different $M\in [0, 4,950]$ values.
When the ER networks are empty ($M$ = 0) or complete ($M$ = 4,950), they have perfect symmetry, which is reflected by $\mathcal{S} = 0$. When half of the potential edge slots are filled ($M = 2,475$), the networks have the least level of symmetry (highest $\mathcal{S}$), due to the high uncertainty of edge placements. Interestingly, the $\mathcal{S}$ versus $M$ curve for ER networks with a fixed $N$ is symmetric on the two sides, suggesting that the expected level of symmetry for an ER network with $N$ nodes and $M$ edges is the same as that of an ER network with $N$ nodes and ${N\choose 2} - M$ edges. 
This can be explained using the following theorem:

\begin{theorem}\thlabel{theo:complement}
    For any given unweighted network with adjacency matrix $A$, its complementary network (each edge turns into a non-edge, and each non-edge turns into an edge) with adjacency matrix $A'$ has the same level of symmetry as its complement ($E(A) = E(A')$, and $\mathcal{S}(A) = \mathcal{S}(A')$). 
\end{theorem}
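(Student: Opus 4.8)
The plan is to exploit the simple algebraic relation between the adjacency matrix of a graph and that of its complement. Writing $J$ for the $N\times N$ all-ones matrix and $I$ for the identity, an unweighted loopless graph and its complement are related by $A' = J - I - A$. The first step is to record two elementary facts about conjugation by a permutation matrix $P$: since $P$ has exactly one $1$ in every row and every column, $PJP^T = J$; and since $PP^T = I$, also $PIP^T = I$. Both $J$ and $I$ are therefore fixed by the map $P\mapsto P(\cdot)P^T$.

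The second step is to substitute $A' = J - I - A$ into $A' - PA'P^T$ and use these facts to cancel the $J$ and $I$ contributions:
\begin{equation}
    A' - PA'P^T = (J - I - A) - P(J - I - A)P^T = -\bigl(A - PAP^T\bigr).
\end{equation}
The entrywise order-$1$ norm is invariant under an overall sign change, so $||A' - PA'P^T||_1 = ||A - PAP^T||_1$, i.e.\ $\mathcal{E}(A', P) = \mathcal{E}(A, P)$ for \emph{every} permutation matrix $P$, not merely for the optimal one.

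The third step is to pass to the minimum. The admissibility constraint $\tr{P} = 0$ depends only on $P$ and on $N$, and $A$ and $A'$ share the same $N$, so the minimization defining $E$ ranges over the identical feasible set for both graphs; combined with the pointwise identity from step two this yields $E(A') = \min_P \mathcal{E}(A', P) = \min_P \mathcal{E}(A, P) = E(A)$, and in particular the same $\hat{P}$ is optimal for $A$ and $A'$. Since $\mathcal{S}$ is obtained from $E$ by dividing by $\frac{1}{2}{N \choose 2}$, which depends only on $N$, the equality $\mathcal{S}(A) = \mathcal{S}(A')$ follows immediately.

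I expect no serious obstacle; the argument is essentially a one-line cancellation. The only points that need care are bookkeeping ones: confirming that the $-I$ term genuinely drops out (both $A$ and $A'$ are loopless, so $A - PAP^T$ has zero diagonal and nothing survives there), and noting that ``order-$1$ norm'' is the entrywise $\ell_1$ norm, for which sign-invariance is trivial — and even if it were instead read as the induced operator $1$-norm, that norm is also sign-invariant, so the conclusion is unaffected.
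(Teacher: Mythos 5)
Your argument is correct and is essentially identical to the paper's own proof: both write $A' = \mathbbm{1} - A - I$, use $P\mathbbm{1}P^T = \mathbbm{1}$ and $PIP^T = I$ to cancel the constant terms, invoke sign-invariance of the entrywise norm to get $\mathcal{E}(A',P) = \mathcal{E}(A,P)$ for every admissible $P$, and then pass to the minimum. No discrepancies to report.
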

\begin{proof}
    For any given unweighted network with adjacency matrix $A$, its network complement's adjacency matrix can be expressed by $A' = \mathbbm{1} - A - I$, where $\mathbbm{1}$ represents an $N\times N$ matrix with all elements equal to one, and $I$ is the identity matrix. 
Therefore, for any permutation matrix $P$, we have 
\begin{align}
    &\mathcal{E}(A', P) = \frac{1}{4} || A' - PA'P^T ||_1  \nonumber \\
    &= \frac{1}{4} || \mathbbm{1} - A - I - P(\mathbbm{1} - A - I)P^T ||_1.
    \label{eq:mathcal_e_ap}
\end{align}
Utilizing the properties of $P$, we can simplify eq.~\ref{eq:mathcal_e_ap}: because the each row and column of $P$ has one non-zero element equal to one, $P\mathbbm{1} = \mathbbm{1}$, therefore $P\mathbbm{1}P^T = \mathbbm{1}$; because of $PP^T=I$, we have $PIP^T = PP^T = I$. Therefore we have
\begin{align}
    \mathcal{E}(A', P) &= \frac{1}{4} || \mathbbm{1} - A - I - \mathbbm{1} + PAP^T + I) ||_1 \nonumber \\
    &= \frac{1}{4} || -A + PAP^T ||_1 \nonumber \\
    &= \frac{1}{4} || A - PAP^T ||_1 \nonumber \\
    &= \mathcal{E}(A, P).
\end{align}
Therefore,
\begin{align}
    E(A') &=  \min_P \mathcal{E}(A', P) = \min_P \mathcal{E}(A, P) \nonumber  \\
    &= E(A),
\end{align}
and
\begin{equation}
    \mathcal{S}(A') = \frac{E(A')}{\frac{1}{2} {N\choose 2}} = \frac{E(A)}{\frac{1}{2} {N\choose 2}} = \mathcal{S}(A).
\end{equation}
\end{proof}

For any ER network in ensemble $G(N, M)$, its complementary network can be found in ensemble $G(N, {N\choose 2} - M)$, and vise versa. 
Moreover, there are ${{N\choose 2} \choose M}$ elements in ensemble $G(N, M)$, and there are ${{N\choose 2} \choose {N\choose 2} - M} = {{N\choose 2} \choose M}$ elements in ensemble $G(N, {N\choose 2} - M)$, which means that the two ensembles have the same number of elements. 
Therefore, the networks in ensemble $G(N, M)$ and the networks in ensemble $G(N, {N\choose 2} - M)$ have a one-to-one mapping relation, with each pair of networks being complementary to each other. 
According to Theorem 1, each pair of complementary networks in ensembles $G(N, M)$ and $G(N, {N\choose 2} - M)$ have the same $\mathcal{S}$, therefore the ensemble average of $\mathcal{S}$ for networks in ensembles $G(N, M)$ and $G(N, {N\choose 2} - M)$ are the same.


\subsection{Random Geometric Graphs}


The nodes in random geometric graphs (RGGs) \cite{gilbert1961random} have intrinsic embeddings, which govern the network structure, and provide a certain level of symmetry to the network. Figure~\ref{fig:RGG_P_S_vs_M_N100}(a) shows $\mathcal{S}$ for RGGs (both with and without periodic boundary conditions) with $N=100$ and different connection radii, giving rise to a wide range of $M$. $\mathcal{S}$ for RGGs is less than that of ER networks with the same $N$ and $M$, suggesting that RGGs are more symetric compared to ER networks. This is to be expected due to the latent geometric embedding of nodes in RGGs. Nodes only connect to other nodes that are located in its vicinity, rendering a high similarity between nodes, resulting in a lattice-like space-filling structure with a high level of symmetry. In the range of $M\in [0,800]$, the symmetry levels of RGGs with or without periodic boundary conditions are similar, because when the connection radius is short, the effect of the periodic boundary conditions is negligible. In the range of $M\in [800, 4950]$, $\mathcal{S}$ for RGGs with periodic boundary conditions is noticeably higher than RGGs without periodic boundary conditions. 
Figure~\ref{fig:RGG_P_S_vs_M_N100}(b) shows $\mathcal{S}$ for RGGs (with and without periodic boundary conditions) with $N=100$ and various connection radii $r$. As $r$ increases, $\mathcal{S}$ first increases, and then decreases as RGGs approach complete graphs. The highest $\mathcal{S}$ of RGGs with periodic boundary conditions is higher than that of RGGs without periodic boundary conditions. $\mathcal{S}$ of RGGs with periodic boundary conditions converges faster to zero than RGGs without periodic boundary conditions, due to the fact that periodic boundary conditions give rise to a higher number of edges for a given $r$. 

\begin{figure*}
    \centering
    \includegraphics[width=\textwidth]{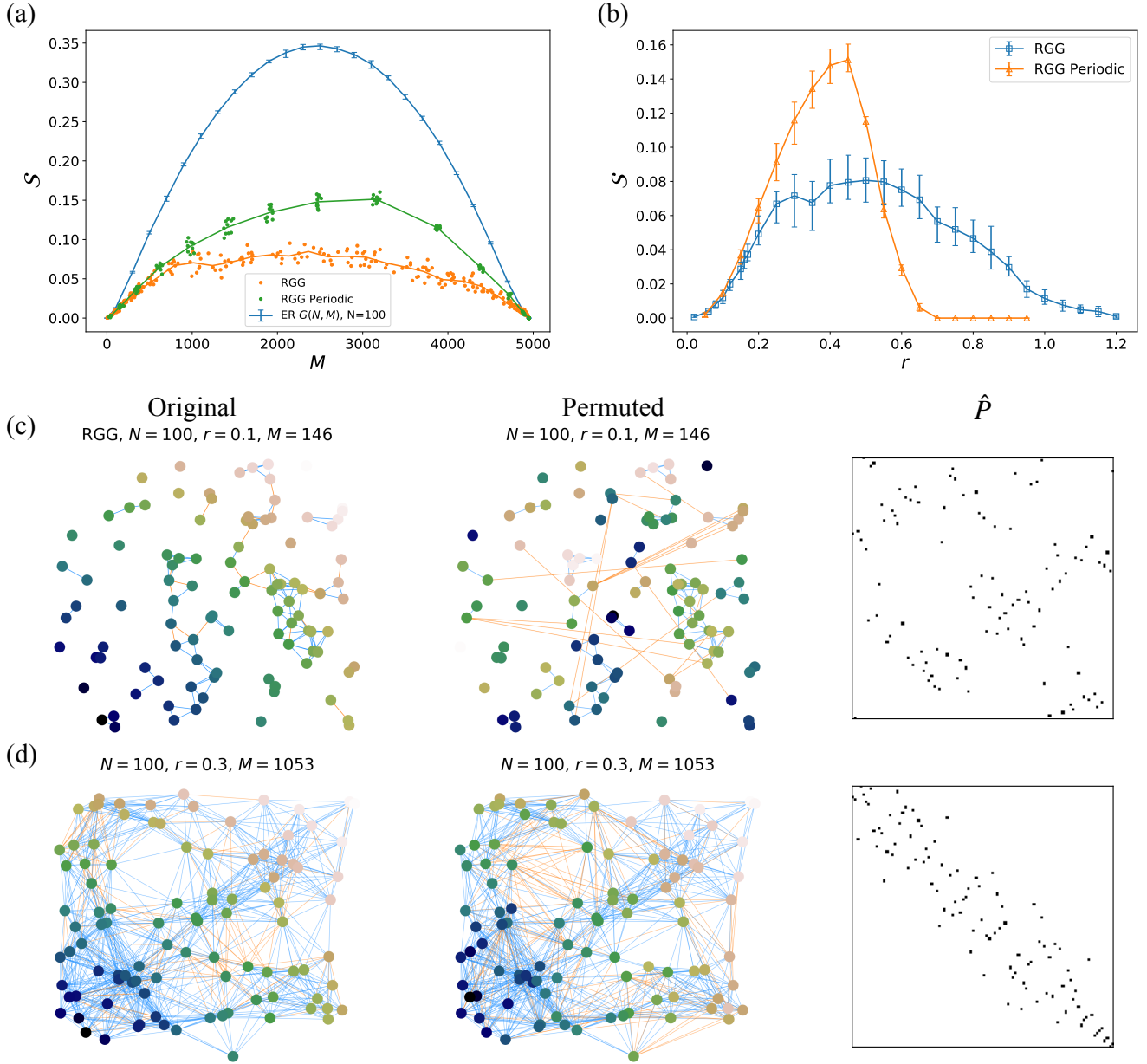}
    \caption{\scriptsize
    Symmetry of RGGs.
    (a): $\mathcal{S}$ for RGGs (with and without periodic boundary conditions) with $N=100$ and different number of edges $M$, compared to that of ER networks with the same sizes. In general, for same-sized and same-number-of-edges networks, RGGs without periodic boundary conditions are more symmetric than RGGs with periodic boundary conditions, and in both cases RGGs are more symmetric than ER networks with the same $N, M$.
    (b): $\mathcal{S}$ for RGGs (with and without periodic boundary conditions) with $N=100$ and various connection radii $r$. As $r$ increases, $\mathcal{S}$ first increases, and then decreases as RGGs approach complete graphs. 
    (c,d): Two examples of RGGs (no periodic boundary conditions) with $N=100$, $r=0.1$ and $r=0.3$ respectively, and their symmetry (optimal permutation matrices $\hat{P}$). The original networks show the networks with nodes arranged by their intrinsic 2D embedding in unit squares, colored in a gradient according to their locations. In the permuted networks the nodes are permuted according to the optimal permutation matrices $\hat{P}$, with the same colors as in the original networks, and the edges are colored in blue if they are preserved in the node permutation, and orange if not. 
    The rows and columns in $\hat{P}$ are ordered according to the corresponding nodes' proximity to the lower left corner of the unit square. 
    }
    \label{fig:RGG_N100_before_after_perm}
    \label{fig:RGG_P_S_vs_M_N100}
\end{figure*}

Figure~\ref{fig:RGG_N100_before_after_perm}(c,d) show two examples of RGGs with $N=100$ and connection radius $r=0.1$, $0.3$ respectively, and the permuted RGGs according to $\hat{P}$. The blue edges are the edges that are preserved in the permutations, and the orange ones are not. 
The rows and columns in $\hat{P}$ are ordered according to the corresponding nodes' proximity to the lower left corner of the unit square. 
Therefore, 
the several clusters of non-zero elements in $\hat{P}$ in Figure~\ref{fig:RGG_N100_before_after_perm}(c) indicate that there are subsets of nodes that are located in proximity being permuted to another location as a whole, and 
the non-zero elements in $\hat{P}$ in Figure~\ref{fig:RGG_N100_before_after_perm}(d) concentrated along the diagonal indicate that all nodes are permuted to a location close to its intrinsic locations.

\subsection{Watts-Strogatz Networks}



\begin{figure*}
    \centering
    \includegraphics[width=\textwidth]{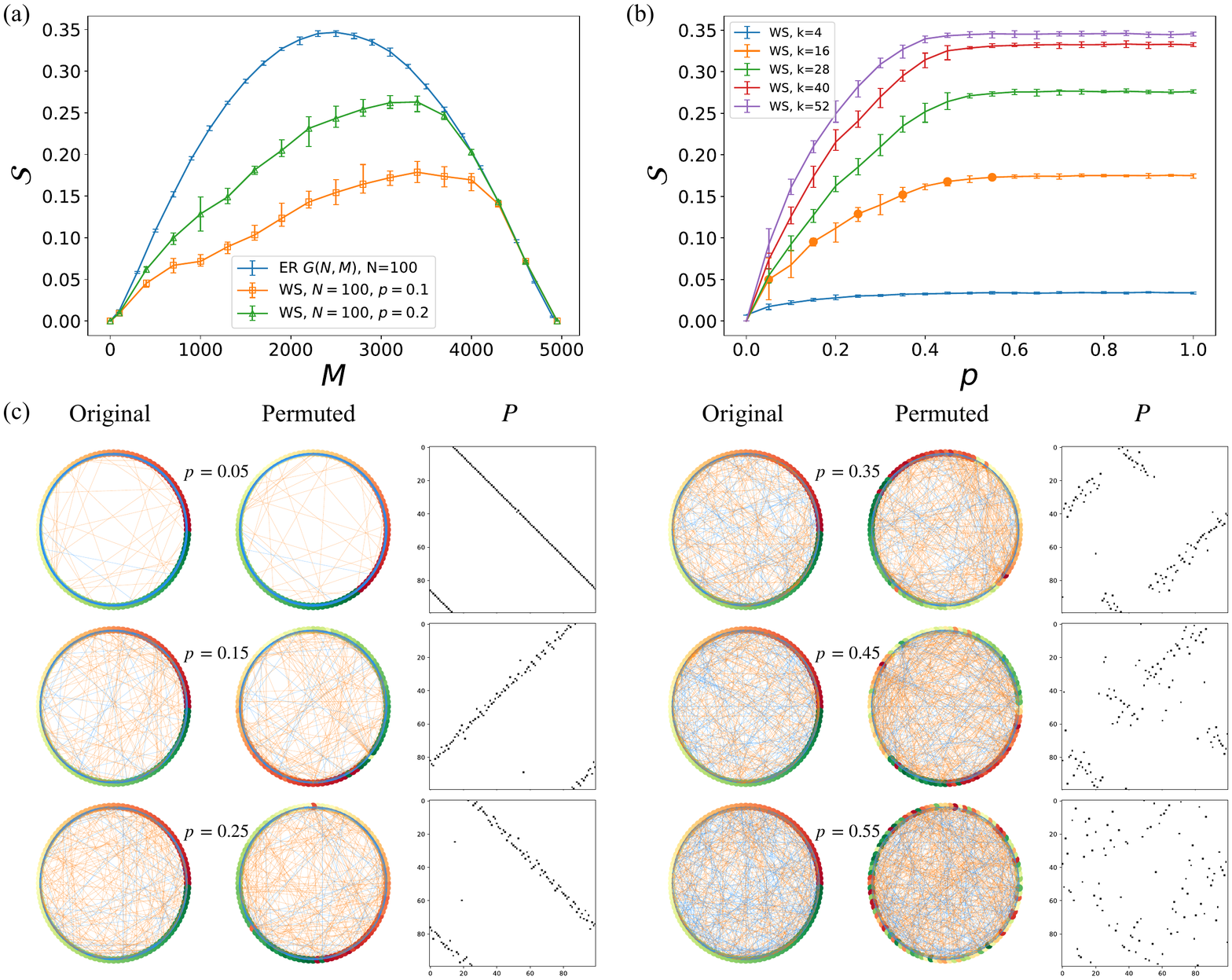}
    \caption{\scriptsize
    Symmetry of WS networks.
    (a): $\mathcal{S}$ for WS networks with $N=100$ and rewiring probabilities $p=0.1$ and $p=0.2$ and different number of edges $M$, compared to that of ER networks with the same sizes. In general, for same-sized and same-number-of-edges networks, WS networks with $p=0.1$ are more symmetric than WS networks with $p=0.2$. As $M$ increases, the level of symmetry of WS networks converges to that of ER networks with the same $N$ and $M$.
    (b): $\mathcal{S}$ for WS networks with $N=100$ and five different $M$, and various rewiring probabilities $p$. As $p$ increases, WS networks with fixed $N,M$ become more asymmetric, until they reaches the same level of symmetry as that of ER networks with the same $N,M$.
    (c): Six examples of WS networks and their symmetry (optimal permutation matrices $\hat{P}$), sampled from the dotted data points in (b). The original networks show the networks with nodes arranged by their intrinsic order on a ring, colored according to a gradient on node orders. In the permuted networks the nodes are permuted according to the optimal permutation matrices $\hat{P}$, with the same colors as in the original networks, and the edges are colored in blue if they are preserved in the node permutation, and orange if not.}
    \label{fig:WS}
    \label{fig:WS_S_M_N100}
\end{figure*}

In the Watts-Strogatz model \cite{watts1998collective}, when the rewiring probability $p=0$, the networks are ring lattices, which have perfect symmetry (Figure~\ref{fig:WS_N50_k10_before_after_perm}(a)).
As the rewiring probability $p$ increases, the orderliness of the network structure is gradually compromised by randomness, and accordingly the symmetry of the network decreases (Figure~\ref{fig:WS_S_M_N100}(b)), therefore $\mathcal{S}$ increases, until it becomes indistinguishable from that of a typical ER network with the same $N$ and $M$. 
For WS networks with $N=100$, their levels of symmetry become the same as ER networks with the same $N$ and $M$ at around $p=0.5$. 
Figure~\ref{fig:WS_S_M_N100}(a) shows $\mathcal{S}$ for WS networks with $N=100$, $p=0.1, 0.2$ and different $M$, compared with $\mathcal{S}$ for ER networks. 
We can see that first of all, WS networks with $p=0.1$ have smaller $\mathcal{S}$ than WS networks with $p=0.2$, which is expected because of the increase in randomness brought by the higher the rewiring probability. 
Second, when $M$ is small, $\mathcal{S}$ of WS networks is less than that for ER networks. As $M$ increases, $\mathcal{S}$ of WS networks gets closer to that for ER networks, until it becomes the same. 
For WS networks with $p=0.1$, the threshold of $M$ beyond which $\mathcal{S}$ is the same as that of ER networks (roughly $M=4,300$) is greater compared to that of WS networks with $p=0.2$ (roughly $M=4,000$), which 
can be explained by the fact that when $p$ is smaller, the number of rewired edges is less for a given $M$, therefore it takes a larger value of $M$ to reach the randomness (symmetry level) of equivalent (same $M$) ER networks.

Figure~\ref{fig:WS}(c) shows six examples of WS networks sampled from dotted data points in Figure~\ref{fig:WS}(b), with $N=100$, $\left< k\right> =16$ and various $p$ values. 
The original networks are shown, with nodes colored in the order of the ring lattice network before edge rewiring. The symmetry in these WS networks are found, encoded in node permutations $\hat{\mathbbm{P}}$ with permutation matrices $\hat{P}$.
The networks after permutations are shown, with preserved edges (blue) at the same locations as in the original networks, and the unpreserved edges (orange) at different locations as in the original networks. 
At $p=0.05$, the nodes in the permuted network are rotated almost perfectly except for two nodes, reflected by its $\hat{P}$. Most of the rewired edges are not preserved, and most of the unrewired edges are preserved (on the brim of the circle). 
The number of slots by which the nodes are rotated is determined by the rewired edges -- a rotation by any number of slots will preserve most of the unrewired edges, therefore preserving the rewired edges to the maximum extent is the determinant factor of by how many slots the nodes are rotated as a whole, for low $p$. 
At $p=0.15$, more edges are rewired, hence more edges are unpreserved in the permutation. 
The optimal permutation found includes an approximate reflection and rotation, accompanied by local disruptions and flips, reflected in its $\hat{P}$. 
At $p=0.35$ and $p=0.45$, we observe that the optimal permutations are no longer approximately a global rotation or reflection of nodes. Instead, the nodes are separated into several regions and each region shows a different permutation pattern, which are the combinations of shifts and rotations and reflections and local disruptions. 
At $p=0.55$, the WS networks' $\mathcal{S}$ has converged to that of equivalent ER networks, and therefore in its optimal permutation the original order of the nodes is completely broken.

\subsection{ Stochastic Block Models}
\label{sec:SBM}


\begin{figure*}
    \centering
    \includegraphics[width=\textwidth]{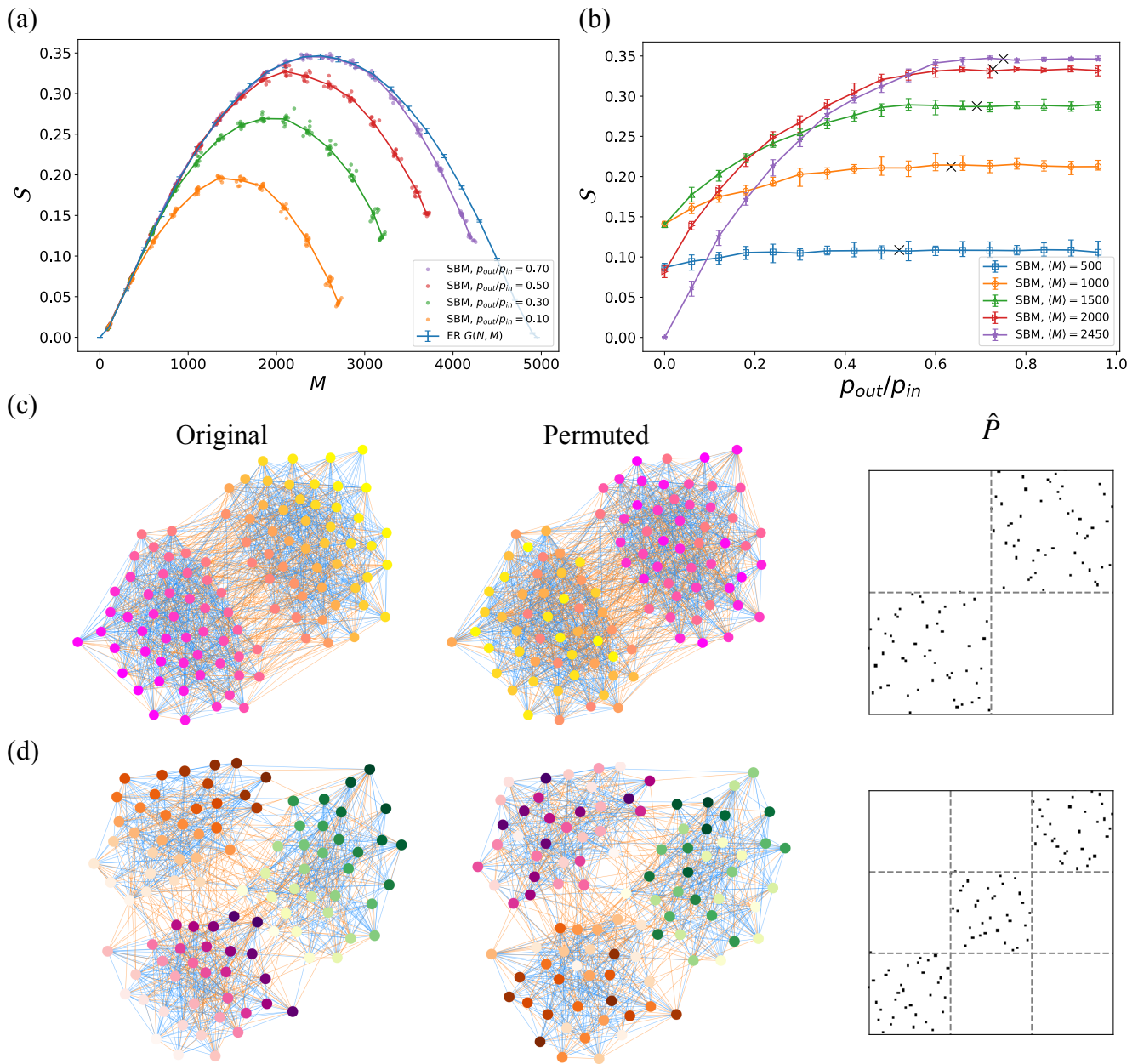}
    \caption{\scriptsize Symmetry of SBM networks.
    (a): $\mathcal{S}$ for SBM networks with $N=100$, several fixed $\frac{p_{out}}{p_{in}}$, and $p_{in}$ ranging from $0$ to $1$, which gives rise to a wide range of $M$. The maximum expected $M$ for SBM networks is limited by $\frac{p_{out}}{p_{in}}$ due to the fact that $p_{in}\leq 1$. In general, for same-sized and same-number-of-edges networks, SBM networks are more symmetric than ER networks.
    (b): $\mathcal{S}$ for SBM networks with $N=100$, several fixed expected number of edges and various $\frac{p_{out}}{p_{in}}$. As $\frac{p_{out}}{p_{in}}$ increases, $\mathcal{S}$ increases until it reaches that of ER networks with the same $N, M$.
    The black ``$\times$'' mark the detectability transition threshold of SBM networks, beyond which the SBM networks are indistinguishable from ER networks.
    (c,d): Two examples of SBM networks with two and three equal-sized communities respectively, and their symmetry (optimal permutation matrices $\hat{P}$). The original networks show the networks with nodes colored according to their assigned community labels. In the permuted networks the nodes are permuted according to the optimal permutation matrices $\hat{P}$, with the same colors as in the original networks, and the edges are colored in blue if they are preserved in the node permutation, and orange if not. 
    The rows and columns in $\hat{P}$ are ordered according to the nodes' assigned community labels. 
    }
    \label{fig:SBM_Q2_Q3_examples}
    \label{fig:SBM_S_vs_pout:pin_diff_M}
\end{figure*}

The Stochastic Block Model \cite{holland1983stochastic} generate networks with modular structures, with $q$ pre-defined modules, and $p_{a,b}$ ($a,b \in \{ 1,2..,q \}$) as the probability of a node in module $a$ and a node in module $b$ being connected.
Here for simplicity, we restrict the model such that within-module connection probability $p_{in}$ is the same for all modules, and between-module connection probability $p_{out}$ is the same for all pairs of modules. 
We want to understand the effect of the existence of modules on network symmetry. 
Imagine if a network is composed of two connected modules that are identical to each other, then intuitively this network would have a high level of symmetry. 
However, most commonly the modules in networks are not identical. Nevertheless, modularity can still bring a certain level of symmetry to the network structure. 
Figure~\ref{fig:SBM_S_vs_pout:pin_diff_M}(b) shows $\mathcal{S}$ as a function of $\frac{p_{out}}{p_{in}}$ for SBM networks with two same-sized modules, $N=100$ and different $\left< M\right>$ ($\left< M\right>=2 {\frac{N}{2}\choose 2}p_{in} + \left( \frac{N}{2} \right)^2 p_{out}$). 
A smaller $\frac{p_{out}}{p_{in}}$ value indicates a stronger modular structure. 
We can see that as $\frac{p_{out}}{p_{in}}$ increases, $\mathcal{S}$ increases, suggesting that the symmetry level decreases, until $\mathcal{S}$ reaches the same value as that of a typical ER network with the same $N$ and $M$. 
At $\frac{p_{out}}{p_{in}} = 0$, each SBM network is composed of two disconnected equal-sized ER networks. For $\left< M\right>=2,450$ SBM networks at $\frac{p_{out}}{p_{in}} = 0$, $p_{in}=1$, which means the two ER networks are both fully connected, therefore they have $\mathcal{S}=0$. 
Figure~\ref{fig:SBM_S_vs_pout:pin_diff_M}(a) shows $\mathcal{S}$ as a function of $M$ for four different fixed $\frac{p_{out}}{p_{in}}$ values, compared against that of ER networks. The maximum expected $M$ for SBM networks with different $\frac{p_{out}}{p_{in}}$ are limited due to $p_{in}\leq 1$. We can see that for a fixed $\frac{p_{out}}{p_{in}}$, $\mathcal{S}$ first increases as $M$ increases, indistinguishable from $\mathcal{S}$ of ER networks, then it deviates from $\mathcal{\mathcal{S}}$ of ER networks, reaches its peak, and then decreases. Lower $\frac{p_{out}}{p_{in}}$ indicates clearer community structure, which in turn results in a higher level of symmetry (lower $\mathcal{S}$). 

SBM networks have detectability thresholds -- SBM networks can be indistinguishable from ER networks sampled from ensembles with the same average edge density, and the planted modules in these networks are undetectable \cite{dyer1989solution,condon2001algorithms,krivelevich2006semirandom,krzakala2009hiding,bickel2009nonparametric,decelle2011asymptotic}. For SBM networks with $N$ nodes, $q$ same-sized modules and $p_{a,b}$ ($a,b \in \{ 1,2..,q \}$) equal to $p_{in}$ when $a=b$ and $p_{out}$ when $a\neq b$, the modules are detectable only when $p_{in} - p_{out} > \frac{\sqrt{qp_{in} + q(q-1)p_{out}}}{\sqrt{N}}$ \cite{decelle2011asymptotic}. 
Specifically, for SBM networks shown in Figure~\ref{fig:SBM_S_vs_pout:pin_diff_M}(b) ($q=2$ and several fixed expected number of edges $\left< M \right> = 2 {\frac{N}{2}\choose 2}p_{in} + \left( \frac{N}{2} \right)^2 p_{out}$), their detectability condition translates as $\frac{p_{out}}{p_{in}} < \frac{\sqrt{\frac{2\left< M \right>}{N}}-1}{\sqrt{\frac{2\left< M \right>}{N}}+1}$, marked in the figure by ``$\times$''. 
In Figure~\ref{fig:SBM_S_vs_pout:pin_diff_M}(b) $\mathcal{S}$ saturates to $\mathcal{S}$ of ER networks slightly earlier than the detectability threshold.

Symmetry in SBM networks comes from two sources: symmetry within each community, and symmetry of different communities. For example, for a simple SBM network with equal-sized communities that have the same connection probability $p_{in}$ (Figure~\ref{fig:SBM_Q2_Q3_examples}(c,d)), all communities are equivalent to ER networks with connection probability $p_{in}$, which have a certain level of symmetry determined by $p_{in}$ (Figure~\ref{fig:S_ER_N100}). 
Moreover, the communities are similar to one another, 
therefore the communities interchanging with one another contributes to the symmetry of the network as well (Figure~\ref{fig:SBM_Q2_Q3_examples}(c,d)). 
The symmetry of the SBM networks in Figure~\ref{fig:SBM_Q2_Q3_examples}(c,d) is reflected by their optimal node permutation matrices $\hat{P}$.
The rows and columns in $\hat{P}$ are ordered according to the nodes' assigned community labels. 
Therefore, in Figure~\ref{fig:SBM_Q2_Q3_examples}(c) the non-zero elements in $\hat{P}$ are only located in the off-diagonal blocks, indicating that nodes in the two communities are exchanged in the permutation.
Similarly, in Figure~\ref{fig:SBM_Q2_Q3_examples}(d) the non-zero elements in $\hat{P}$ are located in two off-diagonal blocks and one central block on diagonal, indicating that nodes in community one and three are exchanged in the node permutation and nodes in the second community are scrambled within the community in the node permutation.

\section{Real-World Networks}

\begin{figure*}
    \centering
    \includegraphics[width=\textwidth]{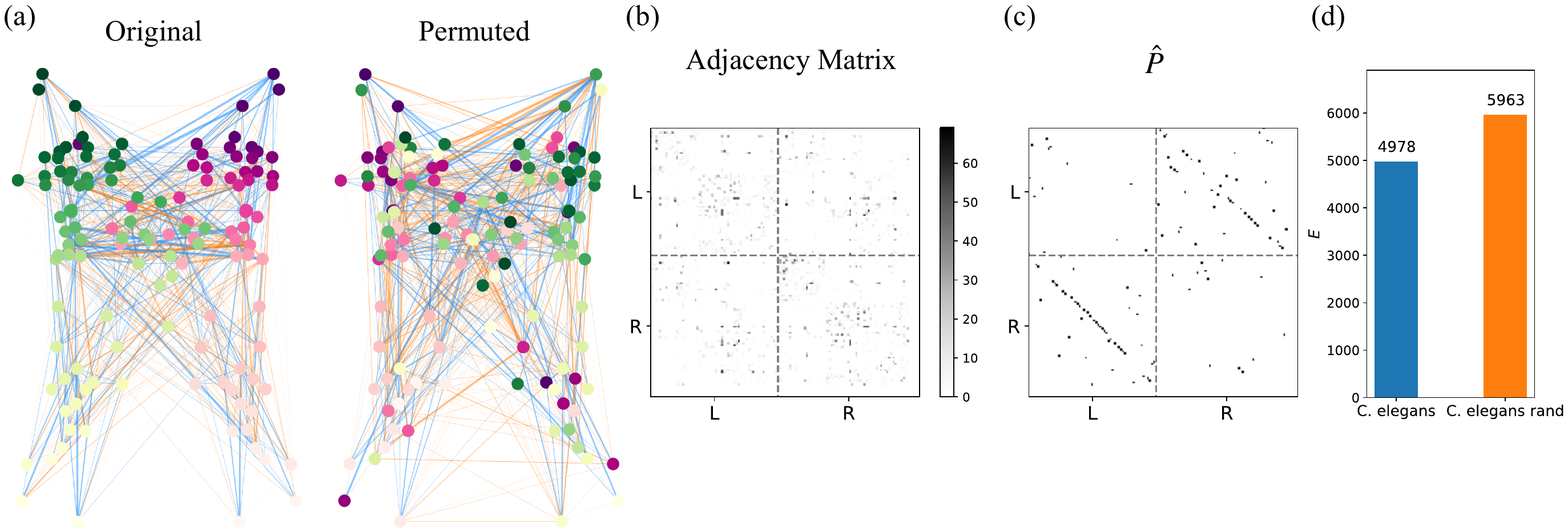}
    \caption{\scriptsize Symmetry of the C. elegans rostral ganglia neural network \cite{cook2019whole, gleeson2010neuroml}. 
    (a): Original neural network \cite{cook2019whole, gleeson2010neuroml} and the network after node permutation. In the original network the locations of the nodes represent the locations of the soma of the neurons in the brain (top view). 
    (b): The adjacnecy matrix of the network \cite{cook2019whole, gleeson2010neuroml}. The network is weighted, and the weights represent the number of synaptic connections between neurons (gap junctions not included). 
    (c): The optimal node permutation matrix $\hat{P}$. In the adjacency matrix in (b) and $\hat{P}$ the rows and columns are ordered according to the hemispheres that the corresponding nodes belong to, and each node and its corresponding node in the other hemisphere (if included in the network) have approximately the same rank among their hemispheres. ``L'' represents the left hemisphere, and ``R'' represents the right hemisphere.
    (d): $E$ of the C. elegans brain network and the network after a degree-preserving randomization. 
    }
    \label{fig:C_elegans}
\end{figure*}

Most brains are composed of two hemispheres, providing natural bilateral symmetry to different types of brain networks. 
Since most brain networks are weighted, we extend $E$ to weighted network, which then represents the least possible sum of the weights of the edges that are not preserved in any global node permutation. 
However, the normalization in the definition of $S$ is no longer applicable to weighted networks, and since there is no good normalization for $E$ of weighted networks, we will use $E$ to measure the level of asymmetry of networks in this section. 

\subsection{C. elegans brain network}

Caenorhabditis elegans is one of the most studied species in network science as well as neural science, due to its completely mapped-out nervous system \cite{white1986structure,hall2008introduction}. 
Here we use our network symmetry algorithm to investigate the symmetry of the C. elegans rostral ganglia neural network with $N=131$ nodes and $M=1,105$ weighted edges (Figure~\ref{fig:C_elegans}(a,b), neurons in the left and right hemispheres are colored differently) \cite{cook2019whole, gleeson2010neuroml}. 
We use the locations of the soma of neurons to represent node locations in the network layout.
We can see that the layout of the C. elegans neural network reflects the bilateral symmetry of the system in physical space. 
$\hat{P}$ of the network is portrayed in Figure~\ref{fig:C_elegans}(c), which results in $E=4,978$. 
In the adjacency matrix in Figure~\ref{fig:C_elegans}(b) and $\hat{P}$ in Figure~\ref{fig:C_elegans}(c) the rows and columns are ordered according to the hemispheres that the corresponding nodes belong to, and each node and its counterpart in the opposite hemisphere (if included in the network) have approximately the same rank among each of the two hemispheres. 
``L'' represents the left hemisphere, and ``R'' represents the right hemisphere.
The non-zero elements in $\hat{P}$ are mostly located in the two off-diagonal blocks, indicating that most nodes in the two hemispheres are exchanged.
Moreover, most non-zero elements in $\hat{P}$ in the two off-diagonal blocks are located on the diagonals of the two blocks, indicating that not only are most nodes permuted to the opposite hemisphere, they are permuted to their counterparts in the opposite hemisphere.
In order to interpret the $E=4,978$ that we found for the C. elegans neural network, and to compare this value to what we would expect from a similar network with no apparent symmetric structure, we did degree-preserving randomization to the C. elegans neural network and measured its $E$, preserving the total sum of weights of all edges in the network (which makes the $E$ of the two networks comparable), as well as the weighted degrees of nodes (which excludes the effect of node degree on network symmetry in the comparison).
In Figure~\ref{fig:C_elegans}(d) we show the results of $E$ for the randomized C. elegans neural network, $20\%$ higher than that of the original network, showing evidence that the C. elegans neural network sculpted by evolution has a certain level of symmetry.

\begin{figure*}
    \centering
    \includegraphics[width=\textwidth]{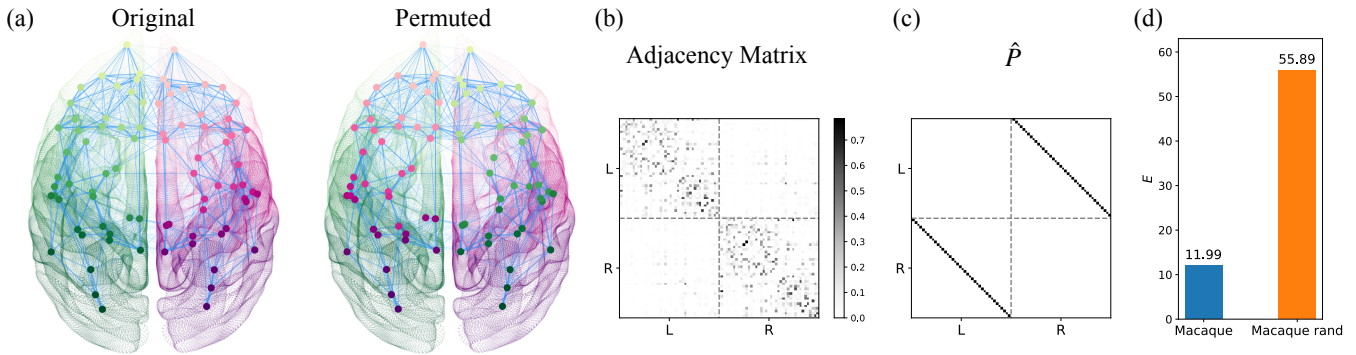}
    \caption{\scriptsize Symmetry of the Macaque brain network \cite{shen2019macaque}. 
    (a): The original Macaque brain network \cite{shen2019macaque} and the network after node permutation. In the original network the nodes are located at the centers of the brain regions that they represent (top view). The nodes are colored to distinguish the two hemispheres.
    (b): The adjacency matrix of the network \cite{shen2019macaque}. The network is weighted. 
    (c): The optimal node permutation matrix $\hat{P}$. In the adjacency matrix in (b) and $\hat{P}$ the rows and columns are ordered according to the hemispheres that the corresponding nodes belong to, and each node and its corresponding node in the other hemisphere have the same rank among their hemispheres. ``L'' represents the left hemisphere, and ``R'' represents the right hemisphere.
    (d): $E$ of the C. elegans brain network and the network after a degree-preserving randomization. 
    }
    \label{fig:macaque_before_after_perm}
\end{figure*}

\subsection{Macaque brain network}

Compared to C. elegans, Macaque is a much more advanced and complicated species, and has a much more complicated brain system \cite{kaiser2011evolution}. 
In \cite{shen2019macaque} the Macaque brain network was extracted, with brain regions as nodes ($N=82$), and connections between brain regions as edges ($M=3,312$) (Figure~\ref{fig:macaque_before_after_perm}(b)). 
In Figure~\ref{fig:macaque_before_after_perm}(a) we visualize the Macaque brain network (left). The nodes and brain regions in the left and right hemispheres are colored in different color tones in order to distinguish them. 
The optimal node permutation matrix $\hat{P}$ of the Macaque brain network, which represents the symmetry of the network, is shown in Figure~\ref{fig:macaque_before_after_perm}(c), and the permuted network is shown in Figure~\ref{fig:macaque_before_after_perm}(a) (right).
In the adjacency matrix of the network (Figure~\ref{fig:macaque_before_after_perm}(b)) and $\hat{P}$ (Figure~\ref{fig:macaque_before_after_perm}(c)), the rows and columns are ordered according to the hemispheres that the corresponding nodes belong to, and each node and its counterpart node in the opposite hemisphere have the same rank among their hemispheres.
The non-zero elements in $\hat{P}$ are located on the diagonals of the two off-diagonal blocks (Figure~\ref{fig:macaque_before_after_perm}(c)), indicating that every node is permuted to its exact counterpart in the opposite hemisphere, effectively switching the left and right hemispheres (Figure~\ref{fig:macaque_before_after_perm}(a)). 
However, despite the nodes being permuted to their counterparts in the opposite hemisphere exactly, $E$ of the Macaque brain network is $11.99$ instead of zero.
This is due to the small differences in the wiring of nodes in the two hemispheres, which means that the Macaque brain network does not have perfect bilateral symmetry. 
Nonetheless, $\hat{P}$ as well as a low value of $E$ are able to capture the bilateral symmetry of the Macaque brain network, despite the small imperfection of the symmetry, which would have been undetected if only perfect symmetries are considered. 
We did degree-preserving randomization to the network and measured its $E$, which is almost five folds of that of the original network (Figure~\ref{fig:macaque_before_after_perm}(d)), confirming the high level of structural symmetry of the macaque brain region network. 

The level of bilateral symmetry is higher in the Macaque brain network than that of the C. elegans neural network, and 
the difference between $E$ of the real and randomized macaque brain region network in proportion is greater than that of the C. elegans neural network.
We speculate that one of the contributing factors behind this is that the C. elegans brain network is built on the neuron level, while the Macaque network is built on the brain region level. 
Because the level of noise and randomness is higher at the neuron level compared to the more integrated brain level region, the C. elegans brain network has a lower level of symmetry.

\subsection{U.S. Senate Networks}

The U.S. Senate network is a temporal network describing the relations between all Senates at different times in history \cite{neal2020sign,neal2014backbone}. The nodes in this network represents U.S. Senates, and the edges are signed, representing positive or negative political relations, extracted from bill co-sponsorship data \cite{neal2020sign}. 

We measured the symmetry of the U.S. Senate network at different time periods, and show in Figure~\ref{fig:US_Senate}(a) $E$ of the network changing through time from 1973 to 2015. 
We can see that $E$ fluctuates while decreasing as a general trend with time, indicating that the U.S. Senate network becomes more symmetric over time.
To better understand the reason behind this pattern, and to interpret the symmetry in the network at different time periods, in Figure~\ref{fig:US_Senate}(b) we show six snapshots of the U.S. Senate network, and the symmetry found in each snapshot. 
Nodes that represent Senates affiliated with the Republican party are colored in red, affiliated with the Democratic party are colored in blue, and affiliated with independent parties are colored in green. 
In both the original network snapshots and the permuted network snapshots, the edges that are preserved are colored in blue, and the ones that are not preserved are colored in orange. 
In the permutation matrices $\hat{P}$, the rows and columns are separated into `R', `D' and `I' sections, representing the Republican, Democratic and independent affiliated Senates. The non-zero elements in $\hat{P}$ located in the diagonal R-R, D-D or I-I blocks represent within-party symmetry, and the non-zero elements located in off-diagonal areas represent cross-party symmetry.
We can see in Figure~\ref{fig:US_Senate}(b) that in 1973 and 1981, nodes belonging to the two main parties are mixed, and the preserved edges are distributed among within-party edges as well as between-party edges. Starting from 1989, the network becomes more and more segregated by parties, and a larger portion of the preserved edges are within-party edges. The sub-networks formed by nodes within parties become more symmetric, which is reflected by $\hat{P}$ -- there are almost no node-exchange between different parties.
Interestingly, although the network snapshots of years 1997-2013 are very modular with clear module division and comparable module sizes, they have very different symmetry compared to that of SBM networks (Sec. \ref{sec:SBM}, Figure~\ref{fig:SBM_Q2_Q3_examples}(c,d)). 
As explained in Sec. \ref{sec:SBM}, the symmetry between different modules and the symmetry within each module act as two competing contributors to the total symmetry of an SBM network, and a strong symmetry between different modules results in exchanges between nodes in different modules in $\hat{\mathbbm{P}}$ (Figure~\ref{fig:SBM_Q2_Q3_examples}(c,d)). However, the same type of symmetry is not observed in the U.S. Senate network. We speculate that this is because the symmetry of nodes in each party among themselves is stronger than the symmetry between nodes in the two main parties. Therefore the best symmetry permutation $\hat{\mathbbm{P}}$ of the U.S. Senate network is the combination of node permutations within each party.

\begin{figure*}
    \centering
    \includegraphics[width=\textwidth]{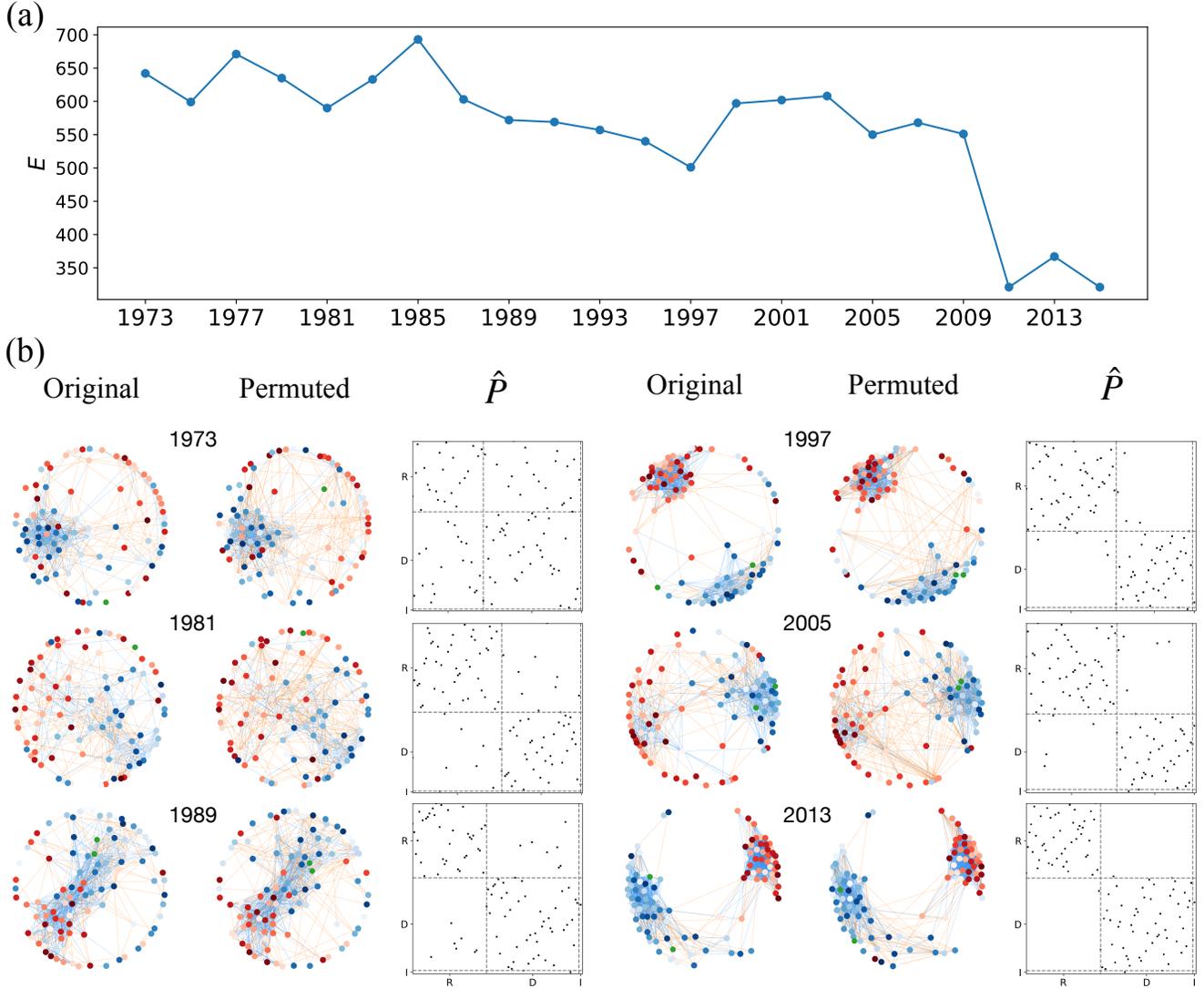}
    \caption{\scriptsize Symmetry of the U.S. Senate network \cite{neal2020sign,neal2014backbone}. 
    (a): $E$ of the U.S. Senate network \cite{neal2020sign,neal2014backbone} at different snapshots over time. 
    (b): Six snapshots of the original U.S. Senate network \cite{neal2020sign,neal2014backbone} and the snapshots after node permutations, as well as the optimal node permutation matrices $\hat{P}$. In the network, the nodes that represent Republican Senates are colored in red, the nodes that represent Democrat Senates are colored in blue, and the nodes that represent Senate in independent parties are colored in green. 
    The rows and columns of $\hat{P}$ are ordered according to the party affiliations of the corresponding nodes. ``R'' represents the Republican Party, and ``D'' represents the Democrat Party, and ``I'' represents independent parties.
    }
    \label{fig:US_Senate}
\end{figure*}


\section{Conclusion}

In this paper we explored the global symmetries of networks. We incorporated the concept of approximate symmetry into the definition of network symmetry, in contrast to other measures of symmetry in networks which relate to exact symmetries. We explored several classical network models and evaluated the level of symmetry of networks generated by these models with various different parameter values, and discovered useful insights concerning network structures. 

Specifically, we found that among the network models that we have explored, ER networks have the lowest level of symmetry compared with networks generated from other models with the same numbers of nodes and edges. We also found that ER networks sampled from ensemble $G(N,M)$ have the same expected level of symmetry as networks sampled from $G(N,N(N-1)/2-M)$, which can be explained by the fact that each network has the same level of symmetry as its network complement. 

For RGGs, the existence of periodic boundary conditions decreases the level of symmetry, which is counter-intuitive, because we would expect that the existence of periodic boundary conditions creates an additional translational symmetry on top of the symmetries of RGGs without periodic boundary conditions. 
We also found that the symmetry of RGGs most commonly comes from the high similarities between nodes that are close together in space. 

For Watts-Strogatz networks, as the rewiring probability increases, the level of symmetry decreases from a perfect symmetry to the level of symmetry expected from an ER network with the same number of nodes and edges, and saturates for all levels of rewiring probability beyond that.

For SBM networks, we discovered that the existence of high-density modules increases the level of symmetry relative to an ER network with the same numbers of nodes and edges, because of the similarities between different modules. For example if we have two same-sized modules with the same density, then the optimal permutation of them includes the interchange of nodes between these two modules.

Finally, we analyzed three examples of real-world networks. Both the C. elegans brain network and the Macaque brain network have a certain level of bilateral symmetry, especially for the Macaque brain network whose optimal node permutation is interchanging nodes in each hemisphere with their counterparts in the opposite hemisphere with precision.
The temporal U.S. Senate network becomes more symmetric over time, caused by the gradual segregation between the Senates affiliated to the two main political parties.

The approximate network symmetry that we defined and measured in this paper provides another dimension to the network models we so often study, in addition to all of the other network measures that we typically use. 
This can also be very useful in dealing with real-world networks: first evaluate whether a real-world network has a certain level of symmetry, and second find out exactly what the symmetry is composed of -- for example whether it is bilateral symmetry, symmetry based on spatial embedding, or symmetry resulted from interchanging highly connected communities and so on. 
Here we only used three real-world network examples, but this method can be applied to any real-world network.

One possible aspect of future work is exploring the symmetry of other real-world networks as well as networks generated by other network models. 
Another possible aspect is exploring the relations between network symmetry and other properties of networks and network ensembles. 
Moreover, it would be also interesting to look into the relation between approximate network symmetry and network spectrum.


\begin{appendices}
\section{Algorithm}

We use an Markov-Chain Monte-Carlos (MCMC) algorithm that minimizes $\mathcal{E}(A, P)$ for a given adjacency matrix $A$ over possible global node permutation matrix $P$. 
We use a logarithmic cooling scheme in the MCMC algorithm: $T(t) = \frac{c}{\log(t+d)}$, where $T(t)$ is the temperature used in the acceptance probability at time-step $t$, and $c$, $d$ are parameters. 

We do not claim that the our current algorithm is always able to obtain the absolute global optimal permutations of all networks, nor do we claim that it is the most efficient algorithm in finding the optimal node permutations.
However, we believe the optimal permutations obtained by our algorithm, even though might not be the absolute global optimal, is correlated with the global optimal such that the results and analysis on different network models and real-world networks based on the obtained optimal permutations are representative of that of the absolute optimal permutations. 

\section{Distribution of $\mathcal{E}(A, P)$}

\begin{figure}
    \centering
    \includegraphics[width=0.45\textwidth]{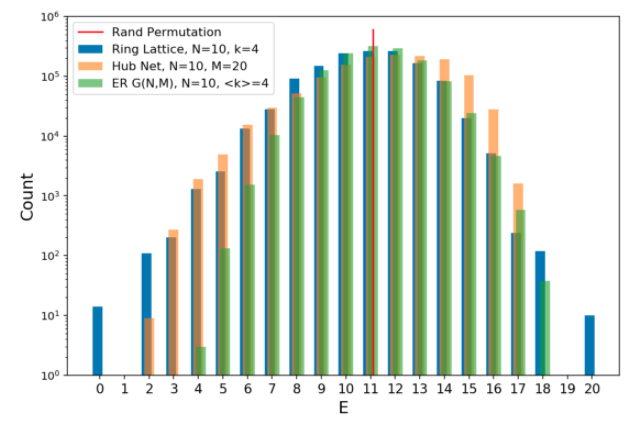}
    \caption{\scriptsize
    The distributions of $\mathcal{E}(A, P)$ for three networks: a ring lattice with $N=10$ and node degree $k=4$, a hub network with $N=10$ and $M=20$, and an ER network with $N=10$ and $M=20$. 
    The red line marks the expected $\mathcal{E}$ from a uniformly randomly chosen node permutation (without the global permutation restriction), for a network with $N=10$ and $M=20$, whose edges are uncorrelated with one another.  
    }
    \label{fig:spectrum}
\end{figure}

\begin{figure}
    \centering
    \includegraphics[width=0.47\textwidth]{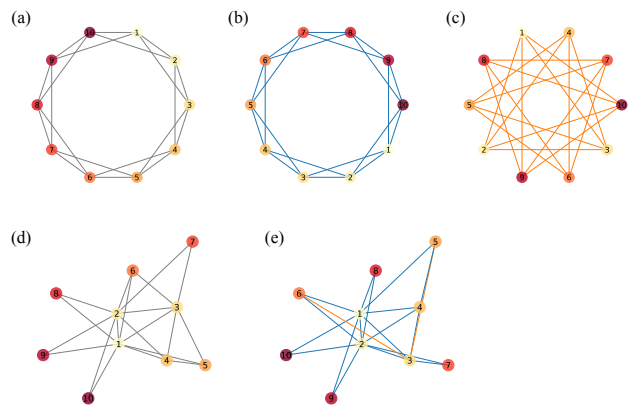}
    \caption{\scriptsize
    (a): A ring lattice network with $N=10$ and node degree $k=4$
    (b): The ring lattice network in (a) after a node permutation that gives rise to $\mathcal{E}(A, P)=E(A)=0$.
    (c): The ring lattice network in (a) after a node permutation that gives rise to $\mathcal{E}(A, P)=\mathcal{E}_{\max}(A, P)=20$.
    (d): A hub network with $N=10$ and $M=20$.
    (e): the hub network in (d) after a node permutation that gives rise to $\mathcal{E}(A, P)=\mathcal{E}_{\min}(A, P)=0$.
    }
    \label{fig:small_net_examples}
\end{figure}

In this section we briefly explore the distribution of all possible values of $\mathcal{E}(A, P)$ for a given network with adjacency matrix $A$. 
In order to obtain a distribution of $\mathcal{E}(A, P)$ for a network, we need to find all possible global node permutations $\mathbbm{P}$ for the network.
For any network, the minimum value of $\mathcal{E}(A, P)$ is $E(A)$, and the maximum value of $\mathcal{E}$ cannot exceed neither $M$ nor ${N\choose 2} - M$ ($\mathcal{E}_{\max} \leq \min(M, {N\choose 2} - M)$). 
The number of $P$ for a network with $N$ nodes scales as $N!$, which greatly limits out ability to find all $P$ for a large $N$.
Here we use three networks all with $N=10$ and $M=20$ as examples to study the distributions of $\mathcal{E}(A, P)$, shown in Figure~\ref{fig:spectrum}. 
The red line in Figure~\ref{fig:spectrum} marks $\mathcal{E}_{rand}=M \left[ 1 - \frac{2M}{N(N-1)} \right]$, which represents the expected number of edges that are preserved in a random node permutation (without the global permutation restriction), with the assumption that the edges are not correlated. 
$\mathcal{E}_{rand}$ is calculated as follows: assume that there is a string of zeros and ones, with in total $\frac{N(N-1)}{2}$ digits, and $M$ ones. After a random shuffle of the digits, the expected number of overlapping ones between the original string and the string after permutation is $M \left[ 1 - \frac{2M}{N(N-1)} \right]$.
The first network is an ring lattice network (Figure~\ref{fig:small_net_examples}(a)), whose all possible $\mathcal{E}$ are shown in Figure~\ref{fig:spectrum} in the blue histogram. The minimum $\mathcal{E}$ for the ring lattice is zero, which means it has $E=0$, indicating a perfect symmetry (Figure~\ref{fig:small_net_examples}(b)). 
The maximum $\mathcal{E}$ is $20$, which is equal to the number of edges, meaning that none of the edges are preserved in this node permutation (Figure~\ref{fig:small_net_examples}(c)).  
We can see that in Figure~\ref{fig:spectrum} $\mathcal{E}_{rand}$ coincides with the peak of the distribution of $\mathcal{E}$ of the ring lattice, suggesting that the approximation used in the calculation of $\mathcal{E}_{rand}$ is acceptable in the case of the ring lattice network.
The second network is an ER network sampled from a $G(N,M)$ ensemble with $N=10$ and $M=20$, and its $\mathcal{E}$ are shown as the green histogram in Figure~\ref{fig:spectrum}. 
We can see that for this ER network, $\mathcal{E}_{\min}$ is no longer zero, which means that it does not have perfect symmetry. The peak of the histogram also coincides with $\mathcal{E}_{rand}$, suggesting that the approximation used in the calculation of $\mathcal{E}_{rand}$ is acceptable in the case of this ER network too. 
Notice that the histogram is less wide-spread compared to that of the ring lattice.
The third network is a hub network with heterogeneous node degrees, with the same $N$ and $M$ (Figure~\ref{fig:small_net_examples}(d)). Compared to the ER network, the hub network has a lower $\mathcal{E}_{\min}$, which is caused by the existence of hubs (the network after the optimal node permutation is shown in Figure~\ref{fig:small_net_examples}(e)). 
For example, in the extreme case of a heterogeneous-degree network, a star network, there is a high level of symmetry due to the equivalence between the leaf nodes. 
Compared with the histograms of the ring lattice and the ER network, the histogram of $\mathcal{E}$ of the hub network is slightly skewed towards higher values of $\mathcal{E}$. This can be explained by the heterogeneous nature of the node degrees - when a hub node and non-hub node exchange places, it could create a large number of non-preserved edges.

\end{appendices}

\clearpage

\bibliographystyle{plain}
\bibliography{dlarva-200518}

\end{document}